
\documentclass[letterpaper, 10 pt, conference]{ieeeconf}  

\IEEEoverridecommandlockouts                              
\overrideIEEEmargins


\title{\LARGE \bf
Gradient Play in $n$-Cluster Games with Zero-Order Information}

\author{Tatiana Tatarenko, Jan Zimmermann, J\"urgen Adamy 
\thanks{The authors are with Control Methods and Robotics Lab at the TU Darmstadt, Germany.
        }
\thanks{}
}

\usepackage{amsmath, amssymb, epsfig, graphicx, euscript, bm}
\usepackage{cases}
\usepackage{float, subfigure}
\usepackage{color}
\usepackage{amsthm}
\usepackage{amsfonts}
\usepackage{cite, url}
\usepackage{breqn}
\usepackage{pgfplots}
\usepackage{tikz}
\usepackage[all,cmtip]{xy}
\usepackage{color,hyperref}
\definecolor{darkblue}{rgb}{0,0,1}
\hypersetup{colorlinks,breaklinks,
linkcolor=darkblue,urlcolor=darkblue,anchorcolor=darkblue,citecolor=darkblue}

\usepackage[normalem]{ulem} 

\newcommand{\pro}[2]{{\mathcal{P}_{#1}\left\{{#2}\right\}}}

\newtheorem{theorem}{Theorem}
\newtheorem{definition}{Definition}
\newtheorem{proposition}{Proposition}

\newtheorem{remark}{Remark}

\newtheorem{assumption}{Assumption}

\newcommand{\B}{{\mathbb{B}}}
\newcommand{\Sph}{{\mathbb{S}}}

\newcommand{\R}{{\mathbb{R}}}

\newcommand{\s}{{\sigma}}

\newcommand{\E}{{\mathbb{E}}}

\newcommand{\A}{{\mathcal{A}}}

\newcommand{\Gra}{{\mathcal{G}}}

\newcommand{\tx}{{\tilde{x}}}

\newcommand{\Om}{{\Omega}}
\newcommand{\EuF}{{\EuScript F}}

\newcommand{\xlb}{\underline{x}}
\newcommand{\xub}{\overline{x}}
%
%
%

\newcommand{\bF}{{\mathbf{F}}}

\newcommand{\barx}{{\bar{x}}}

\begin{document}

\maketitle
\thispagestyle{empty}
\pagestyle{empty}

\begin{abstract}                          
	We study a distributed approach for seeking a Nash equilibrium in $n$-cluster games with strictly monotone mappings. Each player within each cluster has access to the current value of her own smooth local cost function estimated by a zero-order oracle at some query point. We assume the agents to be able to communicate with their neighbors in the same cluster over some undirected graph. The goal of the agents in the cluster is to minimize their collective cost. This cost depends, however, on actions of agents from other clusters. Thus, a game between the clusters is to be solved. We present a distributed gradient play algorithm for determining a Nash equilibrium in this game. The algorithm takes into account the communication settings and zero-order information under consideration. We prove almost sure convergence of this algorithm to a Nash equilibrium given  appropriate estimations of the local cost functions' gradients.
\end{abstract}

\section{Introduction}
Distributed optimization and game theory provide powerful frameworks to deal with optimization problems arising in multi-agent systems. In generic distributed optimization problems, the cost functions of agents are distributed across the network, meaning that each agent has only partial information about the whole optimization problem which is to be solved. Game theoretic problems arise in such networks when the agents do not cooperate with each other and the cost functions of these non-cooperative agents are coupled by the  decisions of all agents in the system. The applications of game theoretic and distributed optimization approaches include, for example, electricity markets, power systems,  flow control problems and communication networks \cite{BasharSG, Scutaricdma, Nedich_Over}.

On the other hand, cooperation and competition coexists in many practical situations, such as cloud computing, hierarchical optimization in Smart Grid, and adversarial networks \cite{Cortes2013, Jarrah2015, Niyato2011}. A body of recent work has been devoted to analysis of non-cooperative games and distributed optimization problems in terms of a single model called \emph{$n$-cluster games} \cite{Pang2020, Zeng2019, Ye2018, Ye2020, Zimmermann2021, Meng2020}.
In such $n$-cluster games, each cluster corresponds to a player whose goal is to minimize her own cost function.
However, the clusters in this game are not the actual decision-makers as the optimization of the cluster's objective is controlled by the agents belonging to the corresponding cluster. Each of such agents has her own local cost function, which is available only to this agent, but depends on the joint actions of agents in all clusters.
The cluster's objective, in turn, is the sum of the local cost functions of the agents within the cluster. Therefore, in such models, each agent intends to find a strategy to achieve a Nash equilibrium in the resulting $n$-cluster game, which is a stable state that minimizes the cluster's cost functions in response to the actions of the agents from other clusters.

Continuous time algorithms for the distributed Nash equilibria seeking problem in multi-cluster games were proposed in \cite{Zeng2019, Ye2018, Ye2020}.
The paper \cite{Ye2018} solves an unconstrained multi-cluster game by using gradient-based algorithms, whereas the works \cite{Ye2020} and \cite{Zeng2019} propose a gradient-free algorithm, based on zero-order information, for seeking Nash and generalized Nash equilibria respectively.
In discrete time domain, the work \cite{Meng2020} presents a leader-follower based algorithm, which can solve unconstrained multi-cluster games in linear time.
The authors in \cite{Zimmermann2021} extend this result to the case of leaderless architecture. Both papers \cite{Meng2020, Zimmermann2021} prove linear convergence in games with strongly monotone mappings and first-order information, meaning that agents can calculate gradients of their cost functions and use this information to update their states.
In contrast to that, the work \cite{Pang2020} deals with a gradient-free approach to the cluster games. However, the gradient estimations are constructed in such a way that only convergence to a neighborhood of the equilibrium can be guaranteed. Moreover, these estimations are obtained by using two query points, for which an extra coordination between the agents is required.

Motivated by relevancy of $n$-cluster game models in many engineering applications, we present a discrete time distributed procedure to seek Nash equilibria in $n$-cluster games with zero-order information.
We consider settings, where agents can communicate with their direct neighbors within the corresponding cluster over some undirected graph.
However, in many practical situations the agents do not know the functional form of their objectives and can only access the current values of their objective functions at some query point. Such situations arise, for example, in electricity markets with unknown price functions \cite{elmark}. In such cases, the information structure is referred to as \textit{zero-order oracle}. Our work focuses on zero-order oracle information settings and, thus, assumes agents to have no access to the analytical form of their cost functions and gradients.
The agents instead construct their local query points and get the corresponding cost values from the oracle. Based on these values, the agents estimate their local gradients to be able to follow the step in the gradient play procedure. We formulate the sufficient conditions and provide some concrete example on how to estimate the gradients to guarantee the almost sure convergence of the resulting algorithm to Nash equilibria in $n$-cluster games with  strictly monotone game mappings. To the best of our knowledge, we present the first algorithm solving $n$-cluster games with zero-order oracle and the corresponding one-point gradient estimations.

The paper is organized as follows. In Section~\ref{sec:problem} we formulated the $n$-cluster game with undirected communication topology in each cluster and zero-order oracle information.
Section~\ref{sec:main} introduces the gradient play algorithm which is based on the one-point gradient estimations. The convergence result is presented in  Section~\ref{sec:main} as well. Section~\ref{sec:gradSampl} provides an example of query points and gradient estimations which guarantee convergence of the algorithm discussed in Section~\ref{sec:main}. Section~\ref{sec:sim} presents some simulation results. Finally, Section~\ref{sec:conclusion} concludes the paper.

\textbf{Notations.}
The set $\{1,\ldots,n\}$ is denoted by $[n]$.
For any function $f:K\to\R$, $K\subseteq\R^n$, $\nabla_i f(x) = \frac{\partial f(x)}{\partial x_i}$ is the partial derivative taken in respect to the $i$th coordinate of the vector variable $x\in\R^n$.
We consider real normed space $E$, which is the space of real vectors, i.e. $E = \R^n$.  We use $(u,v)$ to denote the inner product in $E$.
We use $\|\cdot\|$ to denote the Euclidean norm induced by the standard dot product in $E$. Any mapping $g:E\to E$ is said to be \emph{strictly monotone on $Q\subseteq E$}, if $(g(u)-g(v), u - v) >0$ for any $u,v\in Q$, where $u\ne v$.
We use $\B_r(p)$ to denote the ball of the radius $r\ge 0$ and the center $p\in E$ and $\Sph$ to denote the unit sphere with the center in $0\in E$.
We use $\pro{\Om}{v}$ to denote the projection of $v\in E$ to a set $\Om\subseteq E$.
The mathematical expectation of a random value $\xi$ is denoted by $\E\{\xi\}$.
We use the big-$O$ notation, that is, the function $f(x): \R\to\R$ is $O(g(x))$ as $x\to a$, $f(x)$ = $O(g(x))$ as $x\to a$, if $\lim_{x\to a}\frac{|f(x)|}{|g(x)|}\le K$ for some positive constant $K$.

\section{Nash Equilibrium Seeking}\label{sec:problem}
\subsection{Problem Formulation}
We consider a non-cooperative game between $n$ clusters. Each cluster $i\in[n]$ itself consists of $n_i$ agents. Let $J^j_i$ and $\Om^j_i\subseteq \R$\footnote{All results below are applicable for games with different dimensions $\{d^j_i\}$ of the action sets $\{\Om^j_i\}$. The one-dimensional case is considered for the sake of notation simplicity.} denote respectively the cost function and the feasible action set of the agent $j$ in the cluster $i$. We denote the joint action set of the agents in the cluster $i$ by $\Om_i = \Om^1_i\times\ldots\times\Om_i^{n_i}$. Each function $J^j_i(x_i,x_{-i})$, $i\in[n]$, depends on $x_i = (x_i^1,\ldots, x_i^{n_i}) \in\Om_i$, which represents the joint action of the agents within the cluster $i$, and $x_{-i} \in\Om_{-i}=\Om_1\times\ldots\times\Om_{i-1}\times\Om_{i+1}\times\Om_n$,  denoting the joint action of the agents from all clusters except for the cluster $i$.
The cooperative cost function in the cluster $i\in[n]$ is, thus, $J_i(x_i,x_{-i}) = \frac{1}{n_i}\sum_{j=1}^{n_i}J^j_i(x_i,x_{-i})$.\\
 We assume that the agents within each cluster can interact over an undirected communication graph $\Gra_i([n_i],\A_i)$. The set of nodes is the set of the agents $[n_i]$ and the set of undirected arcs $\A_i$ is such that $(k,j)\in\A_i$ if and only if $(j,k)\in\A_i$, i.e. there is a bidirectional communication link between $k$ to $j$, over which information in form of a message can be sent from the agent $k$ to the agent $j$ and vice versa in the cluster $i$.\\
 However, there is \emph{no explicit communication between the clusters}. Instead, we consider the following \emph{zero-order} information structure in the system: No agent has access to the analytical form of any cost function, including its own. Each agent can only observe the value of its local cost function given any joint action of all agents in the system. Formally, given a joint action $x\in\Om$, each agent $j\in[n_i]$, $i\in[n]$ receives the value $J^j_i(x)$ from a zero-order oracle. Especially, no agent has or receives any information about the gradient.

Let us denote the game between the clusters introduced above by $\Gamma(n,\{J_i\},\{\Om_i\}, \{\Gra_i\})$.
We make the following assumptions regarding the game $\Gamma$:
\begin{assumption}\label{assum:convex}
 The $n$-cluster game under consideration is \emph{strictly convex}. Namely, for all $i\in[n]$, the set $\Om_i$ is convex, the cost function $J_i(x_i, x_{-i})$ is continuously differentiable in $x_i$ for each fixed $x_{-i}$. Moreover, the game mapping, which is defined as
  \begin{align}\label{eq:gamemapping}
&\bF(x)\triangleq\left[\nabla_1 J_1(x_1,x_{-1}), \ldots, \nabla_n J_n(x_n,x_{-n})\right]^T
 \end{align}
  is \emph{strictly monotone} on $\Om$.
\end{assumption}

\begin{assumption}\label{assum:Lipschitz}
  Each function $\nabla_{i}J_i^j(x_i,x_{-i})$ is Lipschitz continuous on $\Om$.
\end{assumption}

\begin{assumption}\label{assum:compact}
  The action sets $\Om_j^i$, $j\in[n_i]$, $i\in[n]$, are compact. Moreover, for each $i$ there exists a so called safety ball $\B_r(p)\subseteq \Om_i$ with $r_i>0$ and $p_i\in\Om_i$\footnote{Existence of the safety ball is required to construct feasible points for costs' gradient estimations in the zero-order settings under consideration (see \cite{Bravo}).}.
\end{assumption}

The assumptions above are standard in the literature on both game-theoretic and zero-order optimization \cite{Bravo}.
Finally, we make the following assumption on the communication graph, which guarantees sufficient information  "mixing" in the network within each cluster.
\begin{assumption}\label{assum:connected}
The underlying undirected communication graph $\Gra_i([n_i],\A_i)$ is connected for all $i=1,\ldots, n$. The associated non-negative mixing matrix $W_i=[w^i_{kj}]\in\R^{n\times n}$ defines the weights on the undirected arcs such that $w^i_{kj}>0$ if and only if $(k,j)\in\A_i$ and $\sum_{k=1}^{n_i}w^i_{kj} = 1$, $\forall k\in[n_i]$.
\end{assumption}

One of the stable solutions in any game $\Gamma$ corresponds to a Nash equilibrium defined below.
\begin{definition}\label{def:NE}
 A vector $x^*=[x_1^*,x_2^*,\cdots, x_n^*]^T\in\Om$ is called a \emph{Nash equilibrium} if for any $i\in[n]$ and $x_i\in \Om_i$
 $$J_i(x_i^*,x_{-i}^*)\le J_i(x_{i},x_{-i}^*).$$
 \end{definition}
In this work, we are interested in \emph{distributed seeking of a Nash equilibrium} in any game $\Gamma(n,\{J_i\},\{\Om_i\},\{\Gra_i\})$ with the information structure described above and for which Assumptions~\ref{assum:convex}-\ref{assum:connected} hold.

\subsection{Existence and Uniqueness of the Nash Equilibrium}
In this subsection, we demonstrate the existence of the Nash equilibrium for $\Gamma(n,\{J_i\},\{\Om_i\},\{\Gra_i\})$ under Assumptions~\ref{assum:convex} and \ref{assum:compact}. For this purpose we recall the results connecting Nash equilibria and solutions of variational inequalities from \cite{FaccPang1}.

\begin{definition}\label{def:VI}
Consider a set $Q \subseteq \R^d$ and a mapping $g$: $Q \to \R^d$. A
\emph{solution $SOL(Q,g)$ to the variational inequality problem} $VI(Q,g)$ is a set of vectors $q^* \in Q$ such that $\langle g(q^*), q-q^*\rangle \ge 0$, for any $q \in Q$.
\end{definition}

The following theorem is the well-known result on the connection between Nash equilibria in games and solutions of a definite variational inequality (see Corollary 1.4.2 in \cite{FaccPang1}).

\begin{theorem}\label{th:VINE}
 Consider a non-cooperative game $\Gamma$. Suppose that the action sets of the players $\{\Om_i\}$ are closed and convex, the cost functions $\{J_i(x_i,x_{-i})\}$ are continuously differentiable and convex in $x_i$ for every fixed $x_{-i}$ on the interior of the joint action set $\Om$. Then, some vector $x^*\in \Om$ is a Nash equilibrium in $\Gamma$, if and only if $x^*\in SOL(\Om,\bF)$, where $\bF$ is the game mapping defined by \eqref{eq:gamemapping}.
\end{theorem}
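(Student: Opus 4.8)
The plan is to exploit two structural facts: the Cartesian product form of the joint action set $\Om=\Om_1\times\cdots\times\Om_n$, and the first-order characterization of minimizers of a convex differentiable function over a convex set. The bridge between the two notions is the observation that the Nash condition in Definition~\ref{def:NE} says precisely that, for each $i$, the block $x_i^*$ minimizes the convex differentiable map $x_i\mapsto J_i(x_i,x_{-i}^*)$ over $\Om_i$, while the $VI(\Om,\bF)$ condition in Definition~\ref{def:VI} bundles all of these blockwise conditions into a single inner product over $\Om$. The whole argument is therefore a two-way passage between "blockwise optimality" and "aggregate variational inequality."

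First I would establish necessity (Nash $\Rightarrow$ VI). Assuming $x^*$ is a Nash equilibrium, fix an index $i$; then $x_i^*$ minimizes $J_i(\cdot,x_{-i}^*)$ over the convex set $\Om_i$, and since this map is differentiable the standard first-order optimality condition gives
\[
\langle \nabla_i J_i(x_i^*,x_{-i}^*),\, x_i - x_i^*\rangle \ge 0 \quad \text{for all } x_i\in\Om_i.
\]
Because $\Om$ is a product set, every $x\in\Om$ decomposes as $x=(x_1,\ldots,x_n)$ with each $x_i\in\Om_i$, so summing the above over $i$ yields $\langle \bF(x^*),\, x-x^*\rangle = \sum_{i=1}^n \langle \nabla_i J_i(x_i^*,x_{-i}^*),\, x_i-x_i^*\rangle \ge 0$, which is exactly $x^*\in SOL(\Om,\bF)$.

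Next I would prove sufficiency (VI $\Rightarrow$ Nash). Assuming $x^*\in SOL(\Om,\bF)$, I would fix $i$ and an arbitrary $x_i\in\Om_i$, and test the VI inequality against the feasible point $(x_i,x_{-i}^*)$, which lies in $\Om$ precisely because of the product structure. Since this test point differs from $x^*$ only in the $i$th block, all but the $i$th term of the inner product vanish, leaving $\langle \nabla_i J_i(x_i^*,x_{-i}^*),\, x_i-x_i^*\rangle \ge 0$. At this point I would invoke convexity of $J_i(\cdot,x_{-i}^*)$: for a convex differentiable function the first-order condition is not merely necessary but sufficient for global optimality, so $J_i(x_i^*,x_{-i}^*)\le J_i(x_i,x_{-i}^*)$ for every $x_i\in\Om_i$. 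As $i$ was arbitrary, $x^*$ satisfies Definition~\ref{def:NE}.

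The only genuine subtlety — and the single place where convexity, as opposed to mere differentiability, is indispensable — is this last sufficiency step, where the first-order inequality must be upgraded to an actual minimization; this is exactly where the hypothesis that each $J_i$ is convex in $x_i$ enters. Everything else reduces to the blockwise bookkeeping enabled by $\Om=\Om_1\times\cdots\times\Om_n$, so I do not expect any serious obstacle beyond carefully stating the convex first-order optimality lemma in both its necessary and sufficient forms.
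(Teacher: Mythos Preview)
Your argument is correct and is exactly the standard proof of this equivalence; the paper itself does not supply a proof but simply cites the result as Corollary~1.4.2 in Facchinei--Pang \cite{FaccPang1}, whose proof proceeds along the same blockwise first-order-optimality lines you outline. There is nothing to compare: you have reproduced the classical argument the paper defers to.
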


Next, we formulate the result guaranteeing existence and uniqueness of $SOL(Q,g)$ in the case of strictly monotone map $Q$ (see Corollary 2.2.5 and Proposition 2.3.3 in \cite{FaccPang1}).
\begin{theorem}\label{th:existVI}
 Given the $VI(Q,g)$, suppose that $Q$ is compact and the mapping $g$ is strictly monotone. Then, the solution $SOL(Q,g)$ exists and is a singleton.
\end{theorem}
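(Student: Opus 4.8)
The plan is to establish the two claims separately: existence of a solution to $VI(Q,g)$, and then its uniqueness. Uniqueness will follow cleanly from strict monotonicity, whereas existence will rest on a fixed-point argument, for which I would additionally use that $Q$ is convex (as it is in the game setting, where $\Om$ is a product of convex action sets) and that $g$ is continuous (which holds here since $\bF$ is assembled from the continuous partial gradients $\nabla_i J_i$). Strict monotonicity itself plays no role in existence; it is needed only for uniqueness.

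For existence, the key step is to reformulate the variational inequality as a fixed-point problem via the Euclidean projection. I would invoke the standard projection characterization: for a closed convex set $Q$, a point $q^*$ satisfies $q^* = \pro{Q}{q^* - g(q^*)}$ if and only if $\langle (q^* - g(q^*)) - q^*, q - q^*\rangle \le 0$ for all $q\in Q$, which simplifies to $\langle g(q^*), q - q^*\rangle \ge 0$ for all $q\in Q$. Hence $SOL(Q,g)$ coincides exactly with the set of fixed points of the map $T(q) \triangleq \pro{Q}{q - g(q)}$. I would then note that $T$ maps $Q$ into $Q$ (the projection always lands in $Q$) and is continuous, since $g$ is continuous and projection onto a closed convex set is (nonexpansive, hence) continuous. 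As $Q$ is nonempty, compact and convex, Brouwer's fixed-point theorem guarantees a fixed point $q^*\in Q$ of $T$, which is therefore a solution of $VI(Q,g)$, establishing $SOL(Q,g)\ne\emptyset$.

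For uniqueness, suppose $q_1^*, q_2^* \in SOL(Q,g)$ with $q_1^* \ne q_2^*$. Testing the defining inequality for $q_1^*$ at $q = q_2^*$ gives $\langle g(q_1^*), q_2^* - q_1^*\rangle \ge 0$, and testing it for $q_2^*$ at $q = q_1^*$ gives $\langle g(q_2^*), q_1^* - q_2^*\rangle \ge 0$. Adding these and rearranging yields $\langle g(q_1^*) - g(q_2^*), q_1^* - q_2^*\rangle \le 0$, which directly contradicts strict monotonicity of $g$ on $Q$, since the latter forces this quantity to be strictly positive for $q_1^*\ne q_2^*$. Hence the solution is unique and $SOL(Q,g)$ is a singleton.

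I expect the existence part to be the main obstacle. The uniqueness argument is a two-line pairing computation, whereas existence requires the projection-based fixed-point reformulation, a careful verification that $T$ is a continuous self-map of $Q$, and the correct supplementary hypotheses (convexity of $Q$ and continuity of $g$) to legitimately apply Brouwer's theorem. Since these two facts are precisely the content cited from Corollary 2.2.5 and Proposition 2.3.3 of \cite{FaccPang1}, an alternative and shorter route is simply to invoke those results; the sketch above indicates why they hold in the present setting.
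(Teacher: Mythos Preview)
Your proposal is correct, and in fact it goes well beyond what the paper does: the paper offers no proof at all for this theorem, merely citing Corollary~2.2.5 and Proposition~2.3.3 of \cite{FaccPang1} as its source. Your projection-based Brouwer argument for existence and the two-line pairing argument for uniqueness are the standard proofs behind those cited results, so you have effectively unpacked the citation. You are also right to flag that the theorem as stated is missing hypotheses---convexity of $Q$ and continuity of $g$ are needed for the existence half---and to observe that both hold in the intended application (where $Q=\Om$ is a product of convex sets and $g=\bF$ is built from continuous partial gradients). The paper's approach buys brevity by deferring to a reference; yours buys self-containment and exposes exactly which assumptions carry which conclusion.
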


Taking into account Theorems~\ref{th:VINE} and \ref{th:existVI}, we obtain the following result.
\begin{theorem}\label{th:exist}
   Let $\Gamma(n,\{J_i\},\{\Om_i\},\{\Gra_i\})$ be a game for which Assumptions~\ref{assum:convex} and \ref{assum:compact} hold. Then, there exists the unique Nash equilibrium in $\Gamma$. Moreover, the Nash equilibrium in $\Gamma$ is the solution of $VI(\Om,\bF)$, where $\bF$ is the game mapping (see \eqref{eq:gamemapping}).
\end{theorem}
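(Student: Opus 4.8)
The plan is to obtain Theorem~\ref{th:exist} as a direct consequence of Theorems~\ref{th:VINE} and \ref{th:existVI}, so that the real work reduces to checking that the hypotheses of those two results hold under Assumptions~\ref{assum:convex} and \ref{assum:compact}. First I would address the equivalence between Nash equilibria and $SOL(\Om,\bF)$ supplied by Theorem~\ref{th:VINE}. Its premises require that each $\Om_i$ be closed and convex and that each $J_i(x_i,x_{-i})$ be continuously differentiable and convex in $x_i$ for every fixed $x_{-i}$ on the interior of $\Om$. Convexity of $\Om_i$ and continuous differentiability of $J_i$ in $x_i$ are granted verbatim by Assumption~\ref{assum:convex}, while closedness of $\Om_i$ follows from Assumption~\ref{assum:compact}, since the product of compact sets $\Om_i=\Om_i^1\times\cdots\times\Om_i^{n_i}$ is compact and hence closed. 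The safety-ball requirement of Assumption~\ref{assum:compact} moreover ensures $\Om_i$ has nonempty interior, so that the interior localization invoked by Theorem~\ref{th:VINE} is meaningful.

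The one genuinely nontrivial point, and the step I expect to be the main obstacle, is verifying convexity of $J_i(\cdot,x_{-i})$, which is not posited as an assumption in its own right but must be extracted from the strict monotonicity of the full game mapping $\bF$. My plan is to specialize the strict monotonicity inequality to pairs of joint actions that differ only in the $i$-th block. Fix $x_{-i}$ and take $x,y\in\Om$ with $x_{-i}=y_{-i}$ and $x_i\ne y_i$. Then $x-y$ is supported only on the $i$-th block, so in $(\bF(x)-\bF(y),x-y)=\sum_{k}(\nabla_k J_k(x_k,x_{-k})-\nabla_k J_k(y_k,y_{-k}),x_k-y_k)$ every summand with $k\ne i$ vanishes, leaving $(\nabla_i J_i(x_i,x_{-i})-\nabla_i J_i(y_i,x_{-i}),x_i-y_i)>0$. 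This shows that $\nabla_i J_i(\cdot,x_{-i})$ is (strictly) monotone in $x_i$ on the convex set $\Om_i$, which is precisely the gradient characterization of (strict) convexity of the continuously differentiable function $J_i(\cdot,x_{-i})$. Hence the convexity hypothesis of Theorem~\ref{th:VINE} holds, and applying that theorem gives the equivalence: a vector $x^*\in\Om$ is a Nash equilibrium of $\Gamma$ if and only if $x^*\in SOL(\Om,\bF)$.

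It then remains to invoke Theorem~\ref{th:existVI} to pin down existence and uniqueness of the variational inequality solution. Its two hypotheses are compactness of the feasible set and strict monotonicity of the mapping. Compactness of $\Om=\Om_1\times\cdots\times\Om_n$ again follows from Assumption~\ref{assum:compact} as a finite product of compact sets, and strict monotonicity of $\bF$ on $\Om$ is assumed directly in Assumption~\ref{assum:convex}. Theorem~\ref{th:existVI} then guarantees that $SOL(\Om,\bF)$ is nonempty and a singleton. Combining this with the equivalence established in the previous paragraph, the game $\Gamma$ possesses exactly one Nash equilibrium, and that equilibrium coincides with the unique element of $SOL(\Om,\bF)$, which is the claimed assertion.
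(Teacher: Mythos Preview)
Your proposal is correct and follows the same route as the paper, which simply states that the result follows from Theorems~\ref{th:VINE} and \ref{th:existVI} without further argument. You are in fact more careful than the paper in explicitly extracting the convexity of $J_i(\cdot,x_{-i})$ from the strict monotonicity of $\bF$---a hypothesis of Theorem~\ref{th:VINE} that the paper leaves implicit.
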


Thus, if Assumptions~\ref{assum:convex} and \ref{assum:compact} hold, we can guarantee existence and uniqueness of the Nash equilibrium in the game $\Gamma(n,\{J_i\},\{\Om_i\},\{\Gra_i\})$ under consideration and use the corresponding variational inequality in the analysis of the optimization procedure presented below.

\section{Main Results}\label{sec:main}
\subsection{Zero-order gradient play between clusters}
To deal with the zero-order information available to the agents and local state exchanges within the clusters, we assume each agent $j$ from the cluster $i$ maintains a \emph{local variable}
\begin{align}\label{eq:vector}
x^{(j)}_{i}=[x^{(j)1}_i,\cdots,x^{(j)j-1}_i,x^j_i,x^{(j)i+1}_i,\cdots,x^{(j)n_i}_i]^T\in\Om_i,
\end{align}
which is her estimation of the joint action $x_i=[x^1_i,x^2_i,\cdots,x^{n_i}_i]^T\in\Om_i$ of the agents from her cluster $i$.
Here, $x^{(j)k}_i\in\Om^j_i$ is player $k$'s estimate of $x^j_i$ and  $x^{(j)j}_i=x^j_i\in\Om^j_i$ is the action of agent $j$ from cluster $i$.
The goal of the agents within each cluster is to update their local variables in such a way that the joint action $x = (x_1,\ldots, x_n)\in\Om$ with $x_i = (x^1_i, \ldots, x^{n_i}_i) \in\Om_i$ converges to the Nash equilibrium in the game $\Gamma$ between the clusters as time runs.
To let the agents achieve this goal, we aim to adapt the standard projected gradient play approach to the cluster game with the zero-order information.

At this point we assume each agent $j\in[n_i]$, $i\in[n]$, based on its local estimation $x^{(j)}_i$, constructs a feasible query point $\hat x^{(j)}_i\in\Om_i$ and sends it to the oracle. As a reply from the oracle, the agent receives the value $J^j_i(\hat x^{(j)}_i,\hat{\tx}_{-i})$.
The vector $\hat{\tx}_{-i}$ here corresponds to the point obtained by some combination of the query vectors sent by the agents from the other clusters. Formally,
\begin{align}\label{eq:tx1}
\hat{\tx}_{-i} = (\hat x^{(j_1)}_1,\ldots,\hat x^{(j_{i-1})}_{i-1},\hat x^{(j_{i+1})}_{i+1},\ldots,\hat x^{(j_n)}_n),
\end{align}
where $j_k$ denotes some agent from the cluster $k\in[n]$, $k\ne i$.
Further each agent $j\in[n_i]$, $i\in[n]$, uses the received value $J^j_i(\hat x^{(j)}_i,\hat{\tx}_{-i})$ to obtain the random estimation $d^j_i$ of her local cost's gradient $\nabla_iJ^j_i$ at the point $(x^{(j)}_i,\tx_{-i})$, where
\begin{align}\label{eq:tx}
    \tx_{-i} = (x^{(j_1)}_1,\ldots,x^{(j_{i-1})}_{i-1},x^{(j_{i+1})}_{i+1},\ldots,x^{(j_n)}_n)
\end{align}
corresponds to the local estimations of other agents (one for each cluster different from $i$) based on which query points are obtained.
Thus, $d^j_i = d^j_i(J^j_i(\hat x^{(j)}_i,\hat{\tx}_{-i}))\in\R^{n_i}$.
As $d^j_i$ is an estimation of $\nabla_iJ^j_i(x^{(j)}_i,\tx_{-i})$, we represent this vector by the following decomposition:
\begin{align}\label{eq:gradEst}
  d^j_i = \nabla_iJ^j_i(x^{(j)}_i,\tx_{-i}) + e^j_i,
\end{align}
where $e^j_i$ is a random vector reflecting inaccuracy of the obtained estimation, i.e. the estimation error vector.
Note that for the joint query point $(\hat x^{(j)}_i, \hat {\tx}_{-i})$ the oracle is free to choose any combination $\hat {\tx}_{-i}$ of the local queries defined in \eqref{eq:tx1}.

Now we are ready to formulate the gradient play between the clusters.
Starting with an arbitrary $x^{(j)}_{i}(0)\in\Om_i$, each agent $j$ updates the local estimation vector $x^{(j)}_{i}$, $j\in[n_i]$, $i\in[n]$, as follows:
\begin{align}\label{eq:pbalg}
  x^{(j)}_{i}(t+1) = \pro{\Om_i}{\sum_{l=1}^{n_i}w^i_{jl}x^{(l)}_i(t) - \alpha_t d^j_i(t)},
\end{align}
where the time-dependent parameter $\alpha_t>0$ corresponds to the step size.

Let $\EuF_t$ be the $\sigma$-algebra generated by the estimations $\{x^{(j)}_{i}(m)\}_{m=0}^t$ up to time $t$, $j\in[n_i]$, $i\in[n]$.
Let $\barx_i(t)=\frac{1}{n_i}\sum_{j=1}^{n_i}x^{(j)}_{i}(t)$ be the running average of the agents' estimations vectors within the cluster $i$. The following proposition describes the behavior of $\barx_i(t)$ in the long run.

\begin{proposition}\label{prop:runav}
  Let Assumptions~\ref{assum:compact} and \ref{assum:connected} hold and $x^{(j)}_i(t)$, $j\in[n_i]$, $i\in[n]$, be updated according to \eqref{eq:pbalg}.  Then for all $j\in[n_i]$, $i\in[n]$
  \begin{enumerate}
    \item if $\lim_{t\to\infty}\alpha_t = 0$ and $\lim_{t\to\infty}\alpha_t\sqrt{\E\{\|e^j_i(t)\|^2|\EuF_t\}} = 0$ almost surely, then $\lim_{t\to\infty}\|x^{(j)}_i(t) - \barx_i(t)\|= 0$ almost surely;
    \item if $\sum_{t=0}^{\infty}\alpha^2_t <\infty$ and $\sum_{t=0}^{\infty}\alpha^2_t{\E\{\|e^j_i(t)\|^2|\EuF_t\}} <\infty$ almost surely, then
    $\sum_{t=0}^{\infty}\alpha_t\|x^{(j)}_i(t) - \barx_i(t)\|<\infty$.
  \end{enumerate}
\end{proposition}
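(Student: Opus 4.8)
The plan is to reduce both statements to a single scalar contraction inequality for the within-cluster disagreement and then treat the two step-size/noise regimes separately. I fix a cluster $i$ and write the deviations $z^{(j)}_i(t)=x^{(j)}_i(t)-\barx_i(t)$, stacked into $Z_i(t)$. First I would strip the projection in \eqref{eq:pbalg}: since $\sum_{l}w^i_{jl}x^{(l)}_i(t)$ is a convex combination of points of $\Om_i$, it lies in $\Om_i$, so the distance from $\sum_{l}w^i_{jl}x^{(l)}_i(t)-\alpha_t d^j_i(t)$ to $\Om_i$ is at most $\alpha_t\|d^j_i(t)\|$. Denoting by $\phi^j_i(t)$ the projection residual, this yields $\|\phi^j_i(t)\|\le\alpha_t\|d^j_i(t)\|$, so each update equals a consensus step plus an additive term of norm at most $2\alpha_t\|d^j_i(t)\|$.

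Next I would subtract the running average. Using that $W_i$ is doubly stochastic, averaging the consensus step returns $\barx_i(t)$, while by Assumption~\ref{assum:connected} the graph is connected, so the restriction of $W_i$ to the subspace orthogonal to $\one$ contracts with a factor $\rho<1$ equal to its second largest singular value. Combining this with the perturbation bound and the decomposition \eqref{eq:gradEst}, and using that each $\nabla_i J_i^j$ is continuous on the compact set $\Om$ (Assumptions~\ref{assum:Lipschitz} and~\ref{assum:compact}) and therefore bounded by some $M$, I obtain the key recursion
\[ \|Z_i(t+1)\|\le\rho\,\|Z_i(t)\|+C\alpha_t\big(1+\textstyle\sum_{j}\|e^j_i(t)\|\big), \]
for a constant $C$ depending on $M$ and $n_i$. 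Unrolling gives $\|Z_i(t)\|\le\rho^{t}\|Z_i(0)\|+C\sum_{s=0}^{t-1}\rho^{\,t-1-s}\alpha_s(1+\sum_j\|e^j_i(s)\|)$.

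For the second claim I would bound $\sum_t\alpha_t\|Z_i(t)\|$ directly from the unrolled inequality: after interchanging the order of summation and applying Young's inequality $\alpha_s\alpha_t\le\tfrac12(\alpha_s^2+\alpha_t^2)$ together with $\sum_{t>s}\rho^{\,t-1-s}\le(1-\rho)^{-1}$, the whole expression is controlled by $\sum_t\alpha_t^2$ and $\sum_t\alpha_t^2\|e^j_i(t)\|^2$. The first series is finite by hypothesis. For the second, I would invoke the standard consequence of the supermartingale convergence theorem: since $\alpha_t^2\|e^j_i(t)\|^2\ge0$ and its conditional expectations $\alpha_t^2\E\{\|e^j_i(t)\|^2\mid\EuF_t\}$ are summable almost surely, the series $\sum_t\alpha_t^2\|e^j_i(t)\|^2$ converges almost surely. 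This yields $\sum_t\alpha_t\|Z_i(t)\|<\infty$.

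For the first claim the deterministic part of the unrolled bound is immediate: $\rho^{t}\|Z_i(0)\|\to0$, and $\sum_{s<t}\rho^{\,t-1-s}\alpha_s\to0$ because the convolution of a summable geometric sequence with the null sequence $\alpha_s$ vanishes. The genuine obstacle is the stochastic tail $\sum_{s<t}\rho^{\,t-1-s}\alpha_s\sum_j\|e^j_i(s)\|$, since the hypothesis only controls the conditional second moments $\alpha_t\sqrt{\E\{\|e^j_i(t)\|^2\mid\EuF_t\}}\to0$ and not the realized magnitudes. My plan here is to split $\alpha_s\|e^j_i(s)\|$ into its conditional mean, which is dominated by the null sequence from the hypothesis and hence, after the geometric convolution, vanishes, plus a martingale-difference remainder; the filtered remainder I would control using the boundedness of the iterates (each $x^{(j)}_i(t)\in\Om_i$ with $\Om_i$ compact, so $\|Z_i(t)\|\le B$) together with a supermartingale/Borel--Cantelli argument to upgrade the resulting $L^2$/in-probability decay to almost sure convergence. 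I expect this last passage --- transferring the conditional-moment control of the estimation error into an almost sure statement about the exponentially weighted error sum --- to be the main difficulty of the proof.
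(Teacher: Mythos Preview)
The paper does not give a self-contained argument: its entire proof is a reference to Lemma~8 of Nedi\'c--Ozdaglar--Parrilo \cite{Nedich_projected}, with a footnote that the analysis there carries over once one works with conditional expectations $\E\{\cdot\mid\EuF_t\}$. Your reconstruction --- stripping the projection via the bound $\|\phi^j_i(t)\|\le\alpha_t\|d^j_i(t)\|$, using double stochasticity and connectivity of $W_i$ to get a contraction factor $\rho<1$ on the disagreement, and unrolling the resulting scalar recursion --- is precisely the machinery behind that cited lemma, and your treatment of part~(2), including the passage from $\sum_t\alpha_t^2\E\{\|e^j_i(t)\|^2\mid\EuF_t\}<\infty$ a.s.\ to $\sum_t\alpha_t^2\|e^j_i(t)\|^2<\infty$ a.s.\ via Robbins--Siegmund, is correct.

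For part~(1), however, the difficulty you isolate is real and your proposed fix is not a proof. The hypothesis $\alpha_t\sqrt{\E\{\|e^j_i(t)\|^2\mid\EuF_t\}}\to 0$ a.s.\ does \emph{not} by itself force $\alpha_t\|e^j_i(t)\|\to 0$ a.s.: if $\|e(t)\|=t$ with probability $1/t$ and $0$ otherwise (independently) and $\alpha_t=1/t$, then $\alpha_t^2\E\{\|e(t)\|^2\}=1/t\to 0$, yet $\alpha_t\|e(t)\|=1$ infinitely often by the second Borel--Cantelli lemma. The same example shows that the abstract recursion $\E\{V_{t+1}\mid\EuF_t\}\le\rho V_t+\gamma_t$ with $\gamma_t\to 0$ a.s.\ does not imply $V_t\to 0$ a.s., so neither the pathwise contraction nor a generic ``conditional mean plus martingale remainder'' split closes the gap, and the boundedness $\|Z_i(t)\|\le B$ does not rescue the argument. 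A clean way out --- and the only way part~(1) is actually used in the paper, namely at the end of the main theorem's proof under Assumption~\ref{assum:step} --- is to note that the summability hypotheses of part~(2) are then available; as you already showed, these give $\sum_t\alpha_t^2\|e^j_i(t)\|^2<\infty$ a.s., hence $\alpha_t\|e^j_i(t)\|\to 0$ a.s., after which the perturbation $\eta_t\to 0$ pathwise and the deterministic geometric-convolution argument yields $\|Z_i(t)\|\to 0$ a.s.\ without any further stochastic subtlety.
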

\begin{proof}
	Follows from Lemma 8 in \cite{Nedich_projected}\footnote{The proof can be repeated up to (37) in \cite{Nedich_projected}. The inequality (37) and the analysis afterward stay valid in terms of the conditional expectation $\E\{\cdot|\EuF_t\}$.}.
\end{proof}

In view of the proposition above and to be able to analyze behavior of the algorithm by means of the running averages $\barx_i(t)$, $i\in[n]$ we make the following assumption on the balance between the step size $\alpha_t$ and the error term $e^j_i(t)$.

\begin{assumption}\label{assum:step}
The step size $\alpha_t$ and the error term $e^j_i(t)$ are such that
  \begin{align*}
     &\sum_{t=0}^{\infty}\alpha_t = \infty, \, \sum_{t=0}^{\infty}\alpha^2_t < \infty, \\
     & \sum_{t=0}^{\infty}\alpha_t\E\{\|e^j_i((t))\||\EuF_t\}<\infty \, \mbox{ almost surely}, \\ &\sum_{t=0}^{\infty}\alpha^2_t\E\{\|e^j_i((t))\|^2|\EuF_t\}<\infty  \, \mbox{ almost surely}.
  \end{align*}
\end{assumption}
In Section~\ref{sec:gradSampl} we shed light on how the gradients can be sampled to guarantee fulfillment of Assumption~\ref{assum:step}.
With Proposition~\ref{prop:runav} in place, we are ready to prove the main result formulated in the theorem below.

\begin{theorem}
  Let Assumptions~\ref{assum:convex}-\ref{assum:step} hold and $x^{(j)}_i(t)$, $j\in[n_i]$, $i\in[n]$, be updated according to \eqref{eq:pbalg}.
  Then the joint action $x(t) = (x_1(t),\ldots,x_n(t))$ converges almost surely to the unique Nash equilibrium $x^*$ in the $\Gamma(n,\{J_i\},\{\Om_i\},\{\Gra_i\})$, i.e.
  $\Pr\{\lim_{t\to\infty}\|x(t)-x^*\|=0\}=1$.
\end{theorem}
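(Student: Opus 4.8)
The plan is to analyze the per-cluster running averages $\barx_i(t)$ rather than the individual estimates, to show that the aggregated vector $\barx(t)=(\barx_1(t),\dots,\barx_n(t))$ evolves as a perturbed projected gradient flow driven by the game mapping $\bF$, and then to exploit strict monotonicity together with the variational characterization of $x^*$ from Theorem~\ref{th:exist}. The first reduction is supplied by Proposition~\ref{prop:runav}: under Assumption~\ref{assum:step} both sets of hypotheses hold, so the disagreements $\|x^{(j)}_i(t)-\barx_i(t)\|$ vanish almost surely and, crucially, are summable against $\alpha_t$. It therefore suffices to prove $\barx(t)\to x^*$ almost surely, since each true action $x^j_i(t)$ differs from the corresponding entry of $\barx_i(t)$ by a disagreement term that tends to zero.

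First I would set up the aggregate Lyapunov function $V(t)=\sum_{i=1}^n\frac{1}{n_i}\sum_{j=1}^{n_i}\|x^{(j)}_i(t)-x^*_i\|^2$, which differs from $\|\barx(t)-x^*\|^2$ only by the squared disagreements, a quantity going to zero almost surely. Applying nonexpansiveness of the projection to each update \eqref{eq:pbalg}, then convexity of $\|\cdot\|^2$ and double stochasticity of $W_i$ to collapse the mixing step, I obtain
\begin{align*}
V(t+1) \le {}& V(t) - 2\alpha_t \sum_{i=1}^n \frac{1}{n_i}\sum_{j=1}^{n_i}\Big\langle d^j_i(t),\, \sum_{l=1}^{n_i} w^i_{jl}x^{(l)}_i(t)-x^*_i\Big\rangle\\
&+\alpha_t^2\sum_{i=1}^n\frac{1}{n_i}\sum_{j=1}^{n_i}\|d^j_i(t)\|^2.
\end{align*}
The last term is $O(\alpha_t^2)$ in conditional expectation by compactness of $\Om$ (Assumption~\ref{assum:compact}), boundedness of $\bF$, and the second-moment bound in Assumption~\ref{assum:step}, hence summable.

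The heart of the argument, and the step I expect to be hardest, is rewriting the cross term so that the aggregated gradient contribution becomes $\langle\bF(\barx(t)),\barx(t)-x^*\rangle$. Using $\nabla_iJ_i=\frac{1}{n_i}\sum_j\nabla_iJ^j_i$ (so $\frac{1}{n_i}\sum_j\nabla_iJ^j_i(\barx(t))$ is exactly the $i$-th block of $\bF(\barx(t))$), the decomposition \eqref{eq:gradEst}, and Lipschitz continuity (Assumption~\ref{assum:Lipschitz}) to replace each argument $(x^{(j)}_i(t),\tx_{-i}(t))$ and each mixed vector $\sum_l w^i_{jl}x^{(l)}_i(t)$ by the common value $\barx(t)$, I can bound the cross term below by $\langle\bF(\barx(t)),\barx(t)-x^*\rangle$ minus terms proportional to the disagreements $\|x^{(j)}_i(t)-\barx_i(t)\|$ (including those entering $\tx_{-i}$ through the other clusters) and to the errors $\|e^j_i(t)\|$. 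After multiplication by $\alpha_t$, each such term is summable almost surely: the disagreement terms by $\sum_t\alpha_t\|x^{(j)}_i(t)-\barx_i(t)\|<\infty$ from Proposition~\ref{prop:runav}, and the error terms by the first error condition in Assumption~\ref{assum:step}. Taking conditional expectations then gives the supermartingale relation $\E\{V(t+1)\mid\EuF_t\}\le V(t)-2\alpha_t\langle\bF(\barx(t)),\barx(t)-x^*\rangle+\beta_t$ with $\sum_t\beta_t<\infty$ almost surely; and by strict monotonicity of $\bF$ combined with the variational inequality $\langle\bF(x^*),\barx(t)-x^*\rangle\ge0$ (Theorem~\ref{th:exist}), the subtracted term $2\alpha_t\langle\bF(\barx(t)),\barx(t)-x^*\rangle$ is nonnegative.

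With this relation in hand, the Robbins--Siegmund supermartingale convergence theorem yields that $V(t)$ converges almost surely to a finite limit and that $\sum_t\alpha_t\langle\bF(\barx(t)),\barx(t)-x^*\rangle<\infty$ almost surely. Since $\sum_t\alpha_t=\infty$, there is a subsequence along which $\langle\bF(\barx(t)),\barx(t)-x^*\rangle\to0$; by compactness of $\Om$ and continuity of $\bF$ any limit point $\hat x$ of that subsequence satisfies $\langle\bF(\hat x),\hat x-x^*\rangle=0$. Writing this as $\langle\bF(\hat x)-\bF(x^*),\hat x-x^*\rangle+\langle\bF(x^*),\hat x-x^*\rangle$, both summands are nonnegative and the first is strictly positive unless $\hat x=x^*$, so $\hat x=x^*$. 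Because the disagreement part of $V(t)$ vanishes almost surely (Proposition~\ref{prop:runav}), $\|\barx(t)-x^*\|^2$ itself converges almost surely, and having a subsequence tending to $0$ forces $\barx(t)\to x^*$. Transferring this through the consensus conclusion of Proposition~\ref{prop:runav} to the true actions finally gives $x(t)\to x^*$ almost surely.
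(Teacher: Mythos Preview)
Your proposal is correct and follows essentially the same route as the paper's proof: your Lyapunov function $V(t)$ is exactly the paper's $\|u(t)\|^2=\sum_{i}\frac{1}{n_i}\sum_j\|x^{(j)}_i(t)-x^*_i\|^2$, and the key steps---projection nonexpansiveness, double stochasticity of $W_i$, replacement of the local arguments $(x^{(j)}_i,\tx_{-i})$ and $v^j_i$ by $\barx(t)$ via Lipschitz continuity, the Robbins--Siegmund theorem, and the subsequence/strict-monotonicity identification of the limit---coincide with the paper's. One small imprecision: the quadratic term $\alpha_t^2\sum_{i,j}\|d^j_i(t)\|^2$ is not literally $O(\alpha_t^2)$ in conditional expectation, since Assumption~\ref{assum:step} does not bound $\E_t\{\|e^j_i(t)\|^2\}$ uniformly in $t$; the correct bound (which the paper uses) is $O\big(\alpha_t^2(1+\E_t\{\|e^j_i(t)\|^2\})\big)$, and this is still summable, so your conclusion stands.
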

\begin{proof}
  Let $x^*=(x_1^*,\ldots, x_n^*)$ be the unique Nash equilibrium in the game $\Gamma(n,\{J_i\},\{\Om_i\},\{\Gra_i\})$, see Theorem~\ref{th:exist}. We proceed with estimating the distance between $x^{(j)}_i(t+1)$ and $x_i^*$. Let $v^j_i(t)=\sum_{l=1}^{n_i}w^i_{jl}x^{(l)}_i(t)$. As $x_i^*\in\Om_i$, we can use the non-expansion of the projection operator to conclude that almost surely (a.s.)\footnote{In the following discussion the big-$O$ notation is defined under the limit $t\to\infty$ (see \textbf{Notations}).}
  \begin{align}\label{eq:proof1}
    &\|x^{(j)}_i(t+1)- x_i^*\|^2  \le \| v^j_i(t)- \alpha_t d^j_i(t) -x_i^*\|^2 \cr
     & = \|v^j_i(t)-x_i^*\|^2\cr
      &\quad- 2\alpha_t(d^j_i(t), v^j_i(t)-x_i^*) + \alpha_t^2\|d^j_i(t)\|^2 \cr
     & =\|v^j_i(t)-x_i^*\|^2 \cr
     &\quad- 2\alpha_t(\nabla_iJ^j_i(x^{(j)}_i(t),\tx_{-i}(t)), v^j_i(t)-x_i^*)\cr
     &\quad - 2\alpha_t(e^j_i((t)), v^j_i(t)-x_i^*) + O(\alpha_t^2(1+\|e^j_i(t)\|^2))\cr
     &\le \|v^j_i(t)-x_i^*\|^2 \cr
     &\quad- 2\alpha_t(\nabla_iJ^j_i(x^{(j)}_i(t),\tx_{-i}(t)), v^j_i(t)-x_i^*)\cr
     &\quad + O(\alpha_t\|e^j_i((t))\|) + O(\alpha_t^2(1+\|e^j_i(t)\|^2)),
  \end{align}
  where in the last equality we used \eqref{eq:gradEst}, which implies that a.s.
   \[\|d^j_i(t)\|^2 \le 2(\|\nabla_iJ^j_i(x^{(j)}_i(t),\tx_{-i}(t))\|^2+\|e^j_i(t)\|^2)\]
   and, thus, $\|d^j_i(t)\|^2 = O(1+\|e^j_i(t)\|^2)$ a.s. (see Assumptions~\ref{assum:convex} and~\ref{assum:compact}), whereas in the last inequality we used the Cauchy–Schwarz inequality, implying
  \[- (e^j_i((t)), v^j_i(t)-x_i^*)\le \|e^j_i((t))\|v^j_i(t)-x_i^*\| \, \mbox{a.s.},\]
  and Assumption~\ref{assum:compact} implying almost sure boundedness of $\|v^j_i(t)-x_i^*\|$.
  We focus now on the terms $\|v^j_i(t)-x_i^*\|^2$ and $- 2\alpha_t(\nabla_iJ^j_i(x^{(j)}_i(t),\tx_{-i}(t)), v^j_i(t)-x_i^*)$.
  Due to Assumption~\ref{assum:connected}, we have that a.s.
  \[\|v^j_i(t)-x_i^*\|^2 = \|\sum_{l=1}^{n_i}w^i_{jl}x^{(l)}_i(t)-x_i^*\|^2\le\sum_{l=1}^{n_i}w^i_{jl}\|x^{(l)}_i(t)-x_i^*\|^2.\]
  And, as $\sum_{j=1}^{n_i}w^i_{jl}=1$, we obtain that a.s.
  \begin{align}\label{eq:sumv}
    \sum_{j=1}^{n_i}\|v^j_i(t)-x_i^*\|^2&\le \sum_{l=1}^{n_i}(\sum_{j=1}^{n_i}w^i_{jl})\|x^{(l)}_i(t)-x_i^*\|^2\cr
     &= \sum_{l=1}^{n_i}\|x^{(l)}_i(t)-x_i^*\|^2.
  \end{align}
  Next,
  \begin{align}\label{eq:sumscpr}
    (\nabla_iJ^j_i(x^{(j)}_i(t),&\tx_{-i}(t)), v^j_i(t)-x_i^*)\cr
     =&(\nabla_iJ^j_i(x^{(j)}_i(t),\tx_{-i}(t)), v^j_i(t)-x_i^*)\cr
    & - (\nabla_iJ^j_i(x^{(j)}_i(t),\tx_{-i}(t)), \barx_i(t)-x_i^*)\cr
    & + (\nabla_iJ^j_i(x^{(j)}_i(t),\tx_{-i}(t)), \barx_i(t)-x_i^*)\cr
    & - (\nabla_iJ^j_i(\barx_i(t)\barx_{-i}(t)), \barx_i(t)-x_i^*)\cr
    & + (\nabla_iJ^j_i(\barx_i(t)\barx_{-i}(t)), \barx_i(t)-x_i^*),
  \end{align}
  where $\bar\tx_{-i}(t)\in\R^{\sum_{k\ne i}n_k}$ is the joint running average of the agents' local variable over all clusters except for the cluster $i$ (see more details in \eqref{eq:tx}).
  Thus, by applying the Cauchy–Schwarz inequality to \eqref{eq:sumscpr}, we get
  \begin{align}\label{eq:sumscpr1}
  &-(\nabla_iJ^j_i(x^{(j)}_i(t),\tx_{-i}(t)), v^j_i(t)-x_i^*) \cr
  &\le \|\nabla_iJ^j_i(x^{(j)}_i(t),\tx_{-i}(t))\| \|v^j_i(t)-\barx_i(t)\|\cr
  &\quad +\|\nabla_iJ^j_i(x^{(j)}_i(t),\tx_{-i}(t))- \nabla_iJ^j_i(\barx_i(t)\barx_{-i}(t))\|\cr
  &\qquad\qquad\qquad\qquad\qquad\qquad\times \|\barx_i(t)-x_i^*\|\cr
  &\quad - (\nabla_iJ^j_i(\barx_i(t)\barx_{-i}(t)), \barx_i(t)-x_i^*), \, \mbox{a.s.}.
  \end{align}
  Taking into account almost sure boundedness of $\|\nabla_iJ^j_i(x^{(j)}_i(t),\tx_{-i}(t))\|$ and $\|\barx_i(t)-x_i^*\|$ (see Assumptions~\ref{assum:convex} and ~\ref{assum:compact}) and Assumption~\ref{assum:Lipschitz}, we conclude that
  \begin{align}\label{eq:sumscpr2}
  &-(\nabla_iJ^j_i(x^{(j)}_i(t),\tx_{-i}(t)), v^j_i(t)-x_i^*) \cr
  &\le O(\|v^j_i(t)-\barx_i(t)\|)\cr
  &\quad +O(\|x^{(j)}_i(t) - \barx_i(t)\| + \|\tx_{-i}(t) - \barx_{-i}(t)\|)\cr
  &\quad - (\nabla_iJ^j_i(\barx_i(t),\tx_{-i}(t)), \barx_i(t)-x_i^*).
  \end{align}
  Thus, we get from \eqref{eq:proof1}
  \begin{align}\label{eq:proof2}
    &\|x^{(j)}_i(t+1)- x_i^*\|^2  \le \|v^j_i(t)-x_i^*\|^2 \cr
     &\quad- 2\alpha_t(\nabla_iJ^j_i(\barx_i(t)\barx_{-i}(t)), \barx_i(t)-x_i^*)\cr
     &\quad + 2\alpha_tO(\|v^j_i(t)-\barx_i(t)\|)\cr
     &\quad + 2\alpha_tO(\|x^{(j)}_i(t) - \barx_i(t)\| + \|\tx_{-i}(t) - \barx_{-i}(t)\|)\cr
     &\quad + O(\alpha_t\|e^j_i((t))\|) + O(\alpha_t^2(1+\|e^j_i(t)\|^2)), \,\mbox{a.s.}.
  \end{align}
  Analogously to \eqref{eq:sumv}
  \[\sum_{j=1}^{n_i}\|v^j_i(t)-\barx_i(t)\| \le \sum_{l=1}^{n_i} \||x^{(j)}_i(t)-\barx_i(t)\|.\]
  Therefore, by averaging both sides of \eqref{eq:proof2} over $j=1,\ldots,n_i$ and taking the conditional expectation in respect to $\EuF_t$ (below we use the notation $\E\{\cdot|\EuF_t\} = \E_t\{\cdot\}$), we obtain that a.s.
    \begin{align}\label{eq:proof3}
    &\frac{1}{n_i}\sum_{j=1}^{n_i}\E_t\{\|x^{(j)}_i(t+1)- x_i^*\|^2\}  \le \frac{1}{n_i}\sum_{j=1}^{n_i}\|x^{(j)}_i(t)-x_i^*\|^2 \cr
     &\quad- 2\alpha_t\frac{1}{n_i}\sum_{j=1}^{n_i}(\nabla_iJ^j_i(\barx_i(t)\barx_{-i}(t)), \barx_i(t)-x_i^*)\cr
     &\quad + O\left(\frac{1}{n_i}\sum_{j=1}^{n_i}\alpha_t\|x^{(j)}_i(t)-\barx_i(t)\|\right)\cr
     &\quad + O(\alpha_t\|\tx_{-i}(t) - \barx_{-i}(t)\|)\cr
     &\quad + \frac{1}{n_i}\sum_{j=1}^{n_i}\left(O(\alpha_t\E_t\{\|e^j_i((t))\|\}+\alpha_t^2(1+\E_t\{\|e^j_i(t)\|^2\}))\right)\cr
    &= \frac{1}{n_i}\sum_{j=1}^{n_i}\|x^{(j)}_i(t)-x_i^*\|^2 \cr
     &\quad- 2\alpha_t(\nabla_iJ_i(\barx_i(t)\barx_{-i}(t)), \barx_i(t)-x_i^*)+ h_i(t),
  \end{align}
  where
  \begin{align*}
    h_i(t) & = O\left(\frac{1}{n_i}\sum_{j=1}^{n_i}\alpha_t\|x^{(j)}_i(t)-\barx_i(t)\|\right)\cr
         &\quad + O(\alpha_t\|\tx_{-i}(t) - \barx_{-i}(t)\|)\cr
     &\quad + \frac{1}{n_i}\sum_{j=1}^{n_i}O(\alpha_t\E_t\{\|e^j_i((t))\|\})\cr
     &\quad+\frac{1}{n_i}\sum_{j=1}^{n_i}O(\alpha_t^2(1+\E_t\{\|e^j_i(t)\|^2\})).
  \end{align*}
  By taking into account Proposition~\ref{prop:runav} 2) and the definition of $\tx_{-i}(t)$ (see \eqref{eq:tx}), we conclude that a.s.
   \[\sum_{t=0}^{\infty}O\left(\frac{1}{n_i}\sum_{j=1}^{n_i}\alpha_t\|x^{(j)}_i(t)-\barx_i(t)\|\right)<\infty,\]
         \[\sum_{t=0}^{\infty} O(\alpha_t\|\tx_{-i}(t) - \barx_{-i}(t)\|)<\infty.\]
         Moreover, due to Assumption~\ref{assum:step},
   \[\sum_{t=0}^{\infty}\frac{1}{n_i}\sum_{j=1}^{n_i}O(\alpha_t\E_t\{\|e^j_i((t))\|\}<\infty,\]
         \[\sum_{t=0}^{\infty} \frac{1}{n_i}\sum_{j=1}^{n_i}\alpha_t^2(1+\E_t\{\|e^j_i(t)\|^2\}))<\infty\]
         almost surely. Thus,
  \begin{align}\label{eq:h}
    \sum_{t=0}^{\infty} h_i(t)<\infty \, \mbox{a.s. for all }i\in[n].
  \end{align}
  Next, let us introduce the vector $u(t) = (u_1(t),\ldots,u_n(t)),$ where $u_i = \left(\frac{1}{n_i}\sum_{j=1}^{n_i}\|x^{(j)}_i(t)-x_i^*\|^2\right)^{\frac{1}{2}}$. Therefore, summing \eqref{eq:proof3} over $i\in[n]$ implies
  \begin{align}\label{eq:proof4}
    &\E_t\|u(t+1)\|^2\le \|u(t)\|^2 - 2\alpha_t(\bF(\barx(t)), \barx(t)-x^*)\cr
    &\qquad\qquad\qquad\qquad\qquad+ \sum_{i=1}^{n}h_i(t)\cr
    &\le \|v(t)\|^2 - 2\alpha_t(\bF(\barx(t))-\bF(x^*), \barx(t)-x^*)\cr
    &\qquad\qquad\qquad\qquad\qquad+\sum_{i=1}^{n}h_i(t),
  \end{align}
  where in the last inequality we used the fact that $x^*$ is the Nash equilibrium in $\Gamma(n,\{J_i\},\{\Om_i\}, \{\Gra_i\})$ and, thus,
  $(\bF(x^*), \barx(t)-x^*)\ge 0$ a.s. for all $t$ (see Theorem~\ref{th:VINE}). Due to the strictly monotone mapping (see Assumption~\ref{assum:convex}), which implies
   \[(\bF(\barx(t))-\bF(x^*), \barx(t)-x^*)\ge 0,\]
   and \eqref{eq:h}, we can apply the Robbins and Siegmund result (see Theorem~\ref{th:th_nonnegrv} in Appendix) to the inequality \eqref{eq:proof4}. With that, we conclude that $\|u(t)\|^2$ converges a.s. as $t\to\infty$ and
  \[\sum_{t=1}^{\infty}\alpha_t(\bF(\barx(t))-\bF(x^*), \barx(t)-x^*)<\infty \, \mbox{a.s.}\]
  Taking the inequality above and the fact that $\sum_{t=0}^{\infty}\alpha_t = \infty$ into account, we conclude that
  \[\liminf_{t\to\infty}(\bF(\barx(t))-\bF(x^*), \barx(t)-x^*)=0 \,\mbox{a.s.},\]
  which together with strict monotonicity of $\bF$  implies existence of the subsequence $\barx(t_m)$ such that
  $\lim_{m\to\infty}\barx(t_m)=x^*$ almost surely. From Proposition~\ref{prop:runav} it follows that  $\lim_{m\to\infty}x^{(j)}_i(t_m)=x^*$ a.s. for all $j\in[n_i]$, $i\in[n]$. Finally, by taking into account existence of the finite almost sure limit of $\|u(t)\|^2 = \sum_{i=1}^{n}\left(\frac{1}{n_i}\sum_{j=1}^{n_i}\|x^{(j)}_i(t)-x_i^*\|^2\right)$ as $t\to\infty$, we conclude that
  \[\Pr\{\lim_{t\to\infty}\|x^{(j)}_i(t)-x_i^*\|=0\}=1 \, \mbox{for all }j\in[n_i], \, i\in[n],\]
  and, therefore,
  \[\Pr\{\lim_{t\to\infty}\|x_i(t)-x_i^*\|=0\}=1 \, \mbox{for all } i\in[n].\]
\end{proof}

\section{Gradient Estimations}\label{sec:gradSampl}
In this section we present an approach to estimate the gradients of the agents' cost functions in such a way that Assumption~\ref{assum:step} is fulfilled.
The idea is borrowed from the work \cite{Bravo} dealing with bandit learning in games.
We assume the safety ball parameters $r_i$ and $p_i$ (see Assumption~\ref{assum:compact}) are known for each agent from the cluster $i$.
To obtain the estimation $d^j_i(t)$ based on the current estimation $x^{(j)}_i(t)$ and to follow the update in \eqref{eq:pbalg}, each agent $j\in[n_i]$ in the cluster $i$, $i\in[n]$, takes the following steps at time $t$.
The agent samples the vector $z^j_i(t)$ from the uniform distribution on the unit sphere $\Sph\subset \R^{n_i}$.
The query direction is defined by $w^{(j)}_i(t)=z^j_i(t) - r_i^{-1}(x^{(j)}_i(t) - p_i)$.
Then, the query point at which the oracle calculates the local cost function value is
\begin{align} \label{eq:query_point}
  \hat x^{(j)}_i(t)& = x^{(j)}_i(t) + \s_t w^{(j)}_i(t) \cr
  &=(1-\s_t r_i^{-1})x^{(j)}_i(t)+\s_t (z^j_i + r_i^{-1}p),
\end{align}
where $\s_t$ is the query radius chosen such that $\s_t r_i^{-1}<1$.
Note that, given  $x^{(j)}_i(t)\in\Om_i$, the query point $\hat x^{(j)}_i(t)$ above is feasible, i.e. $\hat x^{(j)}_i(t)\in \Om_i$ (see \cite{Bravo} for more details).
The gradient estimation itself is obtained as follows:
\begin{align}\label{eq:gr0}
 d^j_i(t) = \frac{n_i}{\s_t}J^{(j)}_i(\hat x^{(j)}_i(t), \hat{\tx}_{-i}(t))\cdot z^j_i,
 \end{align}
 where $\hat{\tx}_{-i}(t)$ is defined as in \eqref{eq:tx1}.
This vector is then used to follow the update in \eqref{eq:pbalg}.
As it has been proven in \cite{Bravo} (see, for example, (4.7) in \cite{Bravo}), $d^j_i(t)$, as constructed above, satisfies the following property:
\begin{align}\label{eq:grBravo}
 &d^j_i(t) = \nabla_iJ^j_i(x^{(j)}_i(t),\tx_{-i}(t)) + e^j_i(t),\cr
 &\mbox{where } \, \E_t\{\|e^j_i(t)\|\} = O(\s_t), \cr
 &\qquad\quad \E_t\{\|e^j_i(t)\|^2\} = O\left(\frac{1}{\s^2_t}\right)
\end{align}
with $\tx_{-i}(t)$ defined as in \eqref{eq:tx}.
Thus, for fulfillment of Assumption~\ref{assum:step} the step size parameter $\alpha_t$ and the query radius $\s_t$ must be balanced as follows:
\begin{align*}
  &\sum_{t=0}^{\infty}\alpha_t = \infty, \, \sum_{t=0}^{\infty}\alpha_t < \infty, \cr
  &\sum_{t=0}^{\infty}\alpha_t\s_t < \infty, \sum_{t=0}^{\infty}\frac{\alpha^2_t}{\s_t^2} < \infty.
\end{align*}
One example of an appropriate choice is $\alpha_t = \frac{\alpha_0}{t^a}$, $\s_t = \frac{\sigma_0}{t^b}$ with
\begin{align*}
 &\frac{1}{2}<a\le 1, \, b\ge 0, \\
 & a + b>1, \, 2a-2b>1.
\end{align*}
One possible parameter set is $a = 1$, $b=\frac{1}{3}$.
\begin{remark}
	There exist other approaches to one-point gradient estimations. The most known one corresponds to the queries sampled from the Gaussian distribution (see \cite{TatKam2019, Flaxman}).  However, to guarantee feasibility in this case, the query points have to be projected onto the action sets. To be able to control the deviation term, that appears due to this projection, one needs to introduce an auxiliary time-dependent parameter to the projection step of the main procedure (see \cite{TatKam2020, Flaxman}). Thus, introducing this parameter will somewhat complicate the analysis. That is why we leave the approach based on sampling from the Gaussian distribution beyond the scope of this paper.
\end{remark}

\section{Simulation Results} \label{sec:sim}
In this section, we verify our theoretical analysis with a practical simulation in order to show that the states of the agent system converge to the Nash equilibrium, defined in Definition \ref{def:NE}, when using the update equation \eqref{eq:pbalg} and the oracle gradient estimation of \eqref{eq:gr0}. As an example application, we chose a version of the well-known Cournot game. Consider the following setup: There are $n$ companies that compete against each other regarding the price of some specific product. Each company $i$ owns $n_i$ factories that produce said product. It is assumed that all factories produce the product with the same quality. The cost of factory $j$ belonging to cluster $i$ for producing the amount $x_i^j \in \mathbb{R}^+_0$ of the product is specific for this factory and defined by
\begin{equation}
	C_i^j(x_i^j ) = a_i^j (x_i^j)^2 + b_i^j x_i^j  + c_i^j.
\end{equation}
Naturally, the amount of product produced cannot be negative. Furthermore, each company has lower and upper production limits. The former defines a lower bound $\underline{x}_i^j$, for which production is still cost efficient, while the latter defines a production facility dependent upper bound $\overline{x}_i^j$.\\
Each company $i$ aggregates the product, produced in their $n_i$ factories, and sells it. In this version of the Cournot game, it is assumed that there exists only a single customer instance that buys all the aggregated product from all companies. The price that the customer pays per unit of product is dependent on the total supply by all companies and therefore defined as follows:
\begin{equation}
	P(x) = P_c - \sum_{i=1}^{n} \sum_{j=1}^{n_i} x_i^j,
\end{equation}
where $P_c$ is a constant, which is chosen such that for any decision vector $\underline{x} \leq x \leq \overline{x}$ it holds that $P(x) > 0$. With this price definition and assuming that the production costs for the factories belonging to the company is shared, each company $i$ aims to minimize its profit function, therefore solving the following optimization problem:
\begin{equation}\label{eq:cournotgame}
	\min_{\xlb_i \leq x_i \leq  \xub_i} \sum_{j=1}^{n_i} C_i^j(x_i^j ) - x_i^jP(x_i, x_{-i}) = \min_{\xlb_i \leq x_i \leq  \xub_i} J_i(x_i, x_{-i})
\end{equation}
It can be seen that the companies' profits are coupled by the customer's price function, therefore a Nash equilibrium needs to be found, from which no company has any incentive to deviate. Relating to the $n$-cluster games described in this paper, the companies represent the clusters, while the factories correspond to the agents. It can be readily confirmed that the profit optimization problem in Equation \eqref{eq:cournotgame} fulfils the Assumptions \ref{assum:convex} - \ref{assum:Lipschitz}.\\
\begin{table}
	\centering
	\begin{tabular}{ c|cccc|cccc}
		           & \multicolumn{4}{|c|}{Company $i = 1$} & \multicolumn{4}{c}{Company $i = 2$} \\ \hline
		$j$        & 1  & 2  & 3  &           4            & 1  & 2  & 3  &          4           \\ \hline
		$a_i^j$    & 5  & 8  & 4  &           5            & 3  & 7  & 9  &          2           \\ \hline
		$b_i^j$    & 10 & 11 & 9  &           12           & 10 & 11 & 12 &          9           \\ \hline
		$c_i^j$    & 1  & 3  & 2  &           5            & 3  & 2  & 3  &          1           \\ \hline
		$\xlb_i^j$ & 0  & 0  & 0  &           0            & 0  & 0  & 0  &          0           \\ \hline
		$\xub_i^j$ & 20 & 20 & 20 &           20           & 10 & 10 & 10 &          10
	\end{tabular}
	\label{tab:param}
	\caption{Parameters of agents in Cournot-Game.}
\end{table}
For our simulation, we choose a small setup consisting of two clusters, each containing four agents with parameters listed in Table \ref{tab:param} and cost coefficient $P_c = 250$. The agents inside the cluster $i$ are connected by an undirected communication graph $G_i$ that fulfils Assumption \ref{assum:connected}. Over this graph, state information is shared inside the cluster $i$ such that an estimation of all agent's states $x_i^{(j)}$ can be performed. The agents update their own gradient estimation according to equation \eqref{eq:gr0}, using the zero-order oracle information at the query point $\hat{x}_i^j$ defined in \eqref{eq:query_point}. The time-dependent, decreasing step-size $\alpha_t$ of the gradient update and the query radius $\sigma_t$ are chosen such that Assumption \ref{assum:step} is satisfied. Choosing a good set of parameters $\alpha_0, \sigma_0 >0$ and $a, b$ is crucial for the convergence speed of the algorithm. Even then, due to the fact that only zero-order information is available, convergence is slow. In Figure \ref{fig:states} the states, resulting from the application of the proposed algorithm to the scenario specified above, are plotted. The dashed line marks the true Nash Equilibrium $x^*$, while the two solid coloured lines distinguish between the states of company 1, i.e. $x_1$, and company 2, i.e. $x_2$, respectively.  Because there are only four agents in each cluster, which are connected by an almost fully connected graph, the consensus dynamic of the agent system is almost negligible against the gradient estimation and update dynamic. It can be seen that the algorithm converges to a satisfactory vicinity of the Nash Equilibrium states after about $1 \cdot 10^5$ Iterations at which the error norm between the agent's states and the true Nash equilibrium $x^*$ measures $||x - x^*||_2$ = 0.40. While the constraints of company 2 are not touched, the true Nash Equilibrium for two firms of the second company lies at their maximum production limit 10.

\begin{figure}
	\centering
	\includegraphics{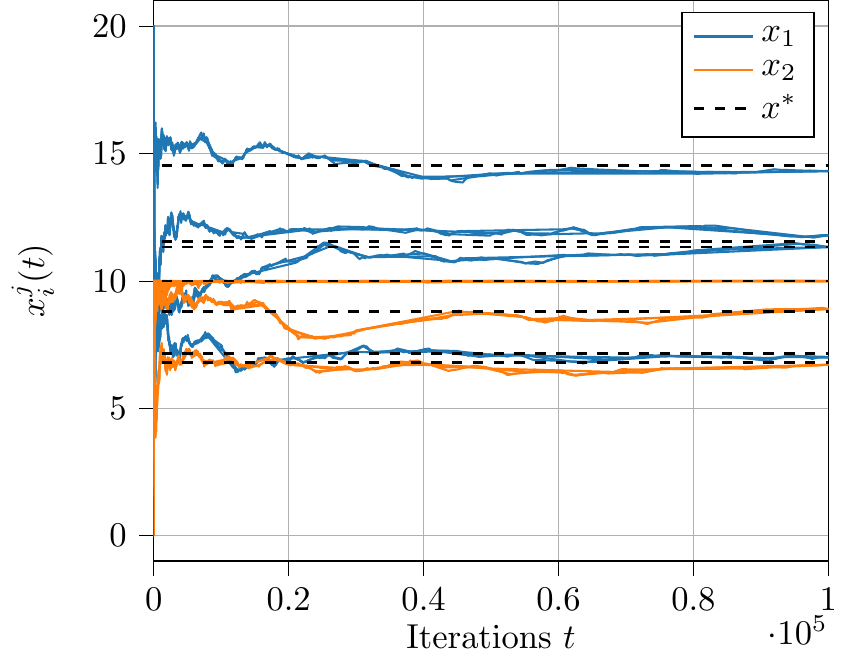}
	\label{fig:states}
	\caption{Convergence of the agent's states towards the Nash equilibrium of the n-cluster game. Error norm after $1 \times 10^5$ steps: $||x - x^*||_2 =  0.400$.}
\end{figure}

\section{Conclusion} \label{sec:conclusion}
In this paper we presented the distributed gradient play algorithm for strictly convex $n$-cluster games with communication setups within each cluster and a zero-order oracle in the whole system. We prove the almost sure convergence of this procedure to the unique Nash equilibria given an appropriate estimations of the local agents' gradients. The future work will be devoted to investigation of possible modifications which should enable a faster convergence rate. 

\appendix
The following is a well-known result of Robbins and Siegmund on non-negative random variables \cite{robbins1985convergence}.
\begin{theorem}\label{th:th_nonnegrv} Let $(\Omega, F, P)$ be a probability space and $F_1\subset F_2\subset\dots$ a sequence of sub-$\sigma$-algebras of $F$.
 Let $z_t, b_t, \xi_t,$ and $\zeta_t$ be non-negative $F_t$-measurable random variables satisfying
 \begin{align*}
  \E(z_{t+1}|F_t)\le z_t(1+b_t)+\xi_t-\zeta_t.
 \end{align*}
Then, almost surely $\lim_{t\to\infty} z_t$ exists and is finite for the case in which $\{\sum_{t=1}^{\infty}b_t<\infty, \;\sum_{t=1}^{\infty}\xi_t<\infty\}$. Moreover, in this case, $\sum_{t=1}^{\infty}\zeta_t<\infty$ almost surely.
\end{theorem}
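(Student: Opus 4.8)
The plan is to reduce the stated recursion to an ordinary supermartingale and then invoke Doob's supermartingale convergence theorem. First I would neutralize the multiplicative factor $1+b_t$ by introducing the deflator $\gamma_t=\prod_{s=1}^{t-1}(1+b_s)^{-1}$, with $\gamma_1=1$; since $b_s\ge 0$, each $\gamma_t$ is $F_{t-1}$-measurable, non-increasing, and bounded in $(0,1]$, and on the event $A=\{\sum_{t}b_t<\infty\}$ the limit $\gamma_\infty=\prod_{s=1}^{\infty}(1+b_s)^{-1}$ is strictly positive. Setting $u_t=\gamma_t z_t$ and using the identity $\gamma_{t+1}(1+b_t)=\gamma_t$, the hypothesis gives $\E(u_{t+1}|F_t)\le u_t+\gamma_{t+1}\xi_t-\gamma_{t+1}\zeta_t$, i.e. the troublesome $(1+b_t)$ has been absorbed.

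Next I would remove the additive input $\xi_t$. Define $S_t=\sum_{s=1}^{t-1}\gamma_{s+1}\xi_s$, an $F_{t-1}$-measurable non-decreasing process with $S_1=0$, and set $V_t=u_t-S_t$. Because $S_{t+1}=S_t+\gamma_{t+1}\xi_t$, a one-line substitution yields $\E(V_{t+1}|F_t)\le V_t-\gamma_{t+1}\zeta_t\le V_t$, so $(V_t)$ is a supermartingale whose per-step decrement is exactly $\gamma_{t+1}\zeta_t\ge 0$. This is the structural heart of the argument: convergence of $V_t$ will deliver convergence of $z_t$, while the accumulated decrement will deliver summability of $\zeta_t$.

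The delicate point is that $V_t$ is bounded below only by $-S_t$, a random quantity that is finite only on $\{\sum_t\xi_t<\infty\}$, so the convergence theorem cannot be applied globally. I would handle this by localization: since the increment $\gamma_{t+1}\xi_t$ is $F_t$-measurable, the times $\tau_C=\inf\{t:S_{t+1}>C\}$ are stopping times, the stopped process $V_{t\wedge\tau_C}$ remains a supermartingale and satisfies $V_{t\wedge\tau_C}\ge -C$. Doob's theorem (a supermartingale bounded below converges a.s. to an integrable limit) then shows $V_{t\wedge\tau_C}$ converges a.s. for every $C$. On $A\cap\{\sum_t\xi_t<\infty\}$ one has $S_\infty<\infty$, hence $\tau_C=\infty$ for all large $C$; since this event lies in $\bigcup_{C\in\mathbb{N}}\{\tau_C=\infty\}$ up to a null set, $V_t$ itself converges a.s. on $A\cap\{\sum_t\xi_t<\infty\}$.

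Finally I would read off the two conclusions. On that event $S_t\to S_\infty<\infty$, so $u_t=V_t+S_t$ converges a.s.; dividing by the strictly positive limit $\gamma_\infty$ gives convergence of $z_t=u_t/\gamma_t$ to a finite limit. For summability, taking expectations in the stopped inequality telescopes to $\E\sum_{s=1}^{t\wedge\tau_C}\gamma_{s+1}\zeta_s\le \E V_1+C<\infty$; letting $t\to\infty$ and then $C\to\infty$ yields $\sum_t\gamma_{t+1}\zeta_t<\infty$ a.s. on the event, and since $\gamma_{t+1}\ge\gamma_\infty>0$ there, $\sum_t\zeta_t<\infty$ a.s. as claimed. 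The remaining technicality, which I would dispatch by the same localization (or by the standard assumption $\E V_1=\E z_1<\infty$), is the integrability that makes each $V_{t\wedge\tau_C}$ a bona fide supermartingale; the genuine obstacle is precisely the random lower bound of Step 3, which the stopping-time truncation is designed to overcome.
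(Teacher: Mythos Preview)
Your argument is the standard and correct proof of the Robbins--Siegmund lemma: introduce the deflator $\gamma_t=\prod_{s<t}(1+b_s)^{-1}$ to kill the multiplicative factor, subtract the cumulative input $S_t=\sum_{s<t}\gamma_{s+1}\xi_s$ to obtain a genuine supermartingale $V_t$, localize with the stopping times $\tau_C=\inf\{t:S_{t+1}>C\}$ to get a lower bound, and then invoke Doob's convergence theorem. The reasoning is sound, including the observation that $S_{t+1}$ is $F_t$-measurable so that $\tau_C$ is indeed a stopping time, and that $\gamma_\infty>0$ on $\{\sum b_t<\infty\}$ allows one to pass from $u_t$ back to $z_t$. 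The integrability caveat you flag at the end is real but minor; it is handled either by assuming $\E z_1<\infty$ (implicit in most statements) or by an additional truncation on $z_1$.

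As for comparison with the paper: there is nothing to compare. The paper does not prove this theorem; it merely states it in the appendix as a classical tool and cites the original Robbins--Siegmund reference. Your write-up therefore goes well beyond what the paper offers on this particular statement.
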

\bibliographystyle{plain}


\end{document}